\documentclass[runningheads]{llncs}
\usepackage{latexsym} \usepackage{amssymb} \usepackage{amstext}
\usepackage{amsmath}
\usepackage{wrapfig}
\usepackage{mathtools}
\usepackage{graphicx}
\usepackage{tabularx}
\usepackage{array}
\newcolumntype{L}[1]{>{\raggedright\let\newline\\\arraybackslash\hspace{0pt}}m{#1}}
\newcolumntype{C}[1]{>{\centering\let\newline\\\arraybackslash\hspace{0pt}}m{#1}}
\newcolumntype{R}[1]{>{\raggedleft\let\newline\\\arraybackslash\hspace{0pt}}m{#1}}
\usepackage[english]{babel}
\usepackage{xspace}
\usepackage{hyperref}
\usepackage{todonotes}
\usepackage{paralist}
\usepackage{caption}
\usepackage{subcaption}
\usepackage{wrapfig}


\newcommand{\myi}{(\emph{i})\xspace}
\newcommand{\myii}{(\emph{ii})\xspace}


 \renewcommand{\L}{\mathcal{L}}
 
 \renewcommand{\P}{\mathcal{P}}

\newcommand{\U}{\mathcal{U}}



\newcommand{\Next}{\raisebox{-0.27ex}{\LARGE$\circ$}}

\newcommand{\Until}{\mathop{\U}}

\newcommand{\true}{\mathit{true}}

\newcommand{\false}{\mathit{false}}


\newcommand{\LTL}{{\sc ltl}\xspace}



\newcommand{\Nat}{{\rm I\kern-.23em N}}
\newcommand{\Prop}{Prop}


\newcommand{\temptrue}{\mathit{temp\_true}}
\newcommand{\tempfalse}{\mathit{temp\_false}}


\newcommand{\ra}{\rightarrow}

\newcommand{\msf}{\mathsf}
\newcommand{\ol}{\overline}

\newcommand{\bpm}{{\sc bpm}\xspace}
\newcommand{\bpmn}{{\sc bpmn}\xspace}
\newcommand{\wsp}{{\sc wsp}\xspace}

\newcommand{\sod}{{\sc s}o{\sc d}\xspace}
\newcommand{\bod}{{\sc b}o{\sc d}\xspace}

\newcommand{\rv}{\textsc{rv}}

\newcommand{\boldred}[1]{#1}


\graphicspath{{imgs/}}

\newenvironment{proofsk}{\noindent\textsl{Proof (sketch).\ }}{\qed}

\title{A Declarative Framework for Specifying and Enforcing Purpose-aware Policies}

\titlerunning{Purpose-aware monitoring of \\
   authorization constraints}

\author{
Riccardo De Masellis\inst{1} \and Chiara Ghidini\inst{2} \and
  Silvio Ranise\inst{2}
}
\authorrunning{R. De Masellis, C. Ghidini,  S. Ranise}

\institute{
Trento RISE,
Via Sommarive 18, 38123 Trento, Italy\\
\email{r.demasellis@trentorise.eu}\\
\and
Bruno Kessler Foundation, 
Via Sommarive 18, 38123 Trento, Italy\\
\email{(ghidini|ranise)@fbk.eu}
}

\begin{document}

\maketitle
\sloppy

\begin{abstract}

Purpose is crucial for privacy protection as it makes users confident
that their personal data are processed as intended. Available
proposals for the specification and enforcement of purpose-aware
policies are unsatisfactory for their ambiguous semantics of purposes
and/or lack of support to the run-time enforcement of policies. 

In this paper, we propose a declarative framework based on a
first-order temporal logic that allows us to give a precise semantics
to purpose-aware policies and to reuse algorithms for the design of a
run-time monitor enforcing purpose-aware policies. 
We also show the complexity of the generation and use of the monitor
which, to the best of our knowledge, is the first such a result in
literature on purpose-aware policies.

\end{abstract}

\begin{sloppypar}


\section{Introduction}

An important aspect of privacy protection is the specification and
enforcement of purposes, i.e.\ users should be confident that their
data are processed as intended.  For instance, email addresses are
used only for billing but not for marketing purposes.  Unfortunately,
as already observed several times in the literature (see,
e.g.,~\cite{JafariEtal:ACM2014} for a thorough discussion), both
specifying and enforcing purposes turn out to be difficult tasks.

\noindent {\textbf{Specification}}.
Following the seminal paper~\cite{westin}, the specification of
privacy constraints consists of establishing when, how, and to what
extent information about people is communicated to others.  In the
context of IT systems, this amounts to define policies governing the
release of personal data for a given purpose.  Such policies are
contributed by the users of a system, also called data owners, and the
organization running the system.  The latter, besides enforcing their
own policies, are also required to take into account laws,
regulations, and directives imposed by local, national, and community
governamental bodies.  From a technological point of view, these
policies are usually mapped to access control policies augmented with
purpose constraints, which we call \emph{purpose-aware policies}
(sometimes called privacy-aware access control policies in the
literature, see, e.g.,~\cite{ardagna-etal}).  In this paper, we do not
consider the problem of deriving purpose-aware policies from the
high-level and heterogenous privacy requirements mentioned above.  We
assume that this has been done and focus instead on the basic building
blocks of the models and specification languages underlying the
policies.
As observed in~\cite{wpes09,JafariEtal:ACM2014}, such building blocks
are data-centric and rule-centric policies.  In the former, every
piece of information is associated with the purposes for which it can
be used; examples are the policies in~\cite{byun-sacmat2005} or those
expressed in the XACML Privacy
Profile.\footnote{\texttt{\url{docs.oasis-open.org/xacml/3.0/xacml-3.0-privacy-v1-spec-cd-03-en.pdf}}}
In the latter, rules specify under which conditions subjects can
perform some action on a given piece of information for some purpose;
examples are those in~\cite{purbac,ardagna-etal} and those expressed
in
EPAL.\footnote{\texttt{\url{www.w3.org/Submission/2003/SUBM-EPAL-20031110}}}
Data-centric policies more easly support the expression of the privacy
preferences of data owners while rule-centric policies permit the
expression of complex constraints
imposed by laws, regulations, and best-practices adopted by
organizations to handle personal data.  Thus, both data- and
rule-centric policies should be supported for the specification of
purpose-aware policies.

One of the most serious problems in purpose-aware policies is the lack
of semantics for purposes, which are usually considered as atomic
identifiers. This gives rise to arbitrariness in the interpretation
of purposes; e.g., if the policy of a company states that emails of
users are collected for the purpose of communication, this allows the
organization to use emails for both billing and marketing when the
majority of users has a strong preference for the first interpretation
only. To solve this problem, several works have observed that
``\emph{an action is for a purpose if it is part of a plan for
  achieving that purpose}''~\cite{tschantz-etal}.  Among the many
possible ways to describe plans, one of the most popular is to use
workflows, i.e.\ collections of activities (called tasks) together
with their causal relationships, so that the successful termination of
a workflow corresponds to achieving the purpose which it is associated
to.
We embrace this interpretation of purpose and avoid ambiguities in its
specification by using a temporal logic which allows us to easily
express the causal relationships among actions in workflows associated
to purposes.  Additionally, the use of a logic-based framework allows
us to express in a uniform way, besides purpose specifications, also
authorization (namely, data- and rule-centric) policies together with
authorization constraints, such as Separation/Bound of Duties
(\sod/\bod).  While temporal logics have been used before for the
specification of authorization policies (see, e.g.,~\cite{ltl-rbac})
and of workflows (see, e.g.,~\cite{sttt-crampton}), it is the first
time---to the best of our knowledge---that this is done for both in
the context of purpose-aware policies.  In particular, the capability
of specifying \sod or bod constraints---which are crucial to capture
company best practices and legal requirements---seems to be left as
future work in the comprehensive framework recently proposed
in~\cite{JafariEtal:ACM2014}.

\noindent {\textbf{Enforcement}}.
Enforcing purpose-aware policies amounts to check if (\textbf{C1}) a
user can peform an action on a certain data for a given purpose and
(\textbf{C2}) the purpose for which a user has accessed the data can
be achieved.

(\textbf{C1}) is relatively easy and well-understood being an
extension of mechanisms for the enforcement of access control policies
(see, e.g.,~\cite{samarati-vimercati} for an overview) by considering
the combined effect of rule- and data-centric policies.  This is so
because tasks are executed under the responsibility of users and tasks
may perform actions on data.  To permit the execution of a task $t$ by
a user $u$, it is thus necessary to check if $u$ can perform all the
actions $\mathit{Acts}$ associated to $t$ (under a given rule-centric
policy) and the data---on which $\mathit{Acts}$ are performed---are
released by their data owners (under a given data-centric policy).

(\textbf{C2}) is much more complex than (\textbf{C1}) as it requires
to foresee if there exists an assignment of users to tasks that allows
for the successful termination of the workflow.  This is so
because---as discussed above---a purpose is associated to a workflow
so that its successful execution implies the achievement of the
purpose.  
The
problem of checking (offline) if a workflow can successfully
terminate, known as the Workflow Satisfiability Problem (\wsp), is
already computationally expensive with one \sod~\cite{wang-li}, and
moreover, the on-line monitoring of authorization constraints requires
to solve several instances of the \wsp~\cite{crampton}.  For
purpose-aware policies, this implies that it is necessary to solve an
instance of the \wsp per user request of executing a task in the
workflow associated to a given purpose.

\noindent {\textbf{Contributions}}.
The paper provides the following contributions:
\begin{compactitem}
\item The \emph{specification} of a comprehensive framework for expressing
  purpose-aware policies which are a combination of data- and
  rule-centric policies together with workflows augmented with
  authorization constraints (Section~\ref{sec:framework} and, in
  particular, Figure~\ref{fig:workflow}). To the best of our
  knowledge, this is the first time authorization constraints are
  considered and naturally integrated in a purpose-aware
  setting.
\item The \emph{semantic formalization} of purpose-aware policies
  as formulas in first-order temporal logic
  (Section~\ref{subsec:semantics}).
\item The provision of \emph{formal techniques} not only for the
  (on-line) enforcement of purpose-aware policies, but also for their
  (off-line) analysis, together with decidability and
  complexity results (Section~\ref{sec:verification}).
\end{compactitem}
The choice of a first-order (linear-time) temporal logic for the
semantics of purpose-aware policies has three main motivations.
First, since both data- and rule-centric policies can be seen as
access control policies, the use of first-order logic formulae to
express them is adequate following the established tradition of
expressing a wide range of access control policy idioms in (fragments
of) first-order logic; see,
e.g.,~\cite{constraintdatalog,arkoudas-etal}.  In the rest of this
paper, we do not elaborate further on this issue as it is well-studied
and focus on aspects more closely related to purposes.  We just remind
that answering a request of performing an action on an object by a
user can be efficiently done (linear time in the size of the
authorization query) as shown in,
e.g.,~\cite{constraintdatalog,arkoudas-etal}.  The second motivation
for choosing a temporal logic comes from the declarative approach to
the specification of workflows proposed
in~\cite{WestergaardM11,AalstPS09} which provides a formal semantics
to their executions.  The third motivation is the adaptation and reuse
of available
results~\cite{DeMasellisS13,DeGiacomoMGMM14,DeGiacomoDMM14} concerning
logical problems in a combination of first-order and linear-time
temporal logic to which verification problems for purpose-aware
policies can be translated.  Additionally, it allows us to reuse
available algorithms for solving logical problems on top of which a
module for the the enforcement of purpose-aware policies can be built.

A running example (Section~\ref{sec:motivations}) introducing the main
issues related to purpose-aware policies is used throught the paper to
illustrate the main concepts of our framework.  Related work and
conclusions are also discussed (Section~\ref{sec:related}).

  \section{Running example}
\label{sec:motivations}

We describe 
a running example, based on \emph{Smart
  campus}\footnote{\url{http://www.smartcampuslab.it}}, which will be
used throughout the paper.
Smart campus is a platform in which 
citizens, institutions and companies can communicate with each other
by exchanging data and services.  It 
provides functionalities to access 
information about
transportations, social services, education, and user 
profiles.
These services allow companies to build new applications and thus
offer new services to citizens. In this kind of scenario, personal
data of users is the main ingredient to provide customized services.
Indeed, users should be confident that the services access only the
data required to their needs and use them for the right purposes.
Additionally, service providers should comply with laws, regulations,
and best practices in handling data mandated by local governments, the
European Union, and enterprises.  In other words, access to personal
data must be mediated by appropriate authorization policies augmented
with purpose constraints so that only authorized subjects have the right
to access certain data for a given purpose.  Following an established
line of works (see, e.g.,~\cite{JafariEtal:ACM2014} for an overview),
we assume that the purpose of an action is determined by its
relationships with other, interrelated, actions.  

For concreteness, we illustrate these ideas by
considering the situation in which some personal data of users in the
Smart campus platform (namely, the work experience and the academic
transcripts) is accessed by JH, a job hunting company, for the purpose
of finding jobs to students.
\begin{figure}[t]
\centering
    \includegraphics[width=\textwidth]{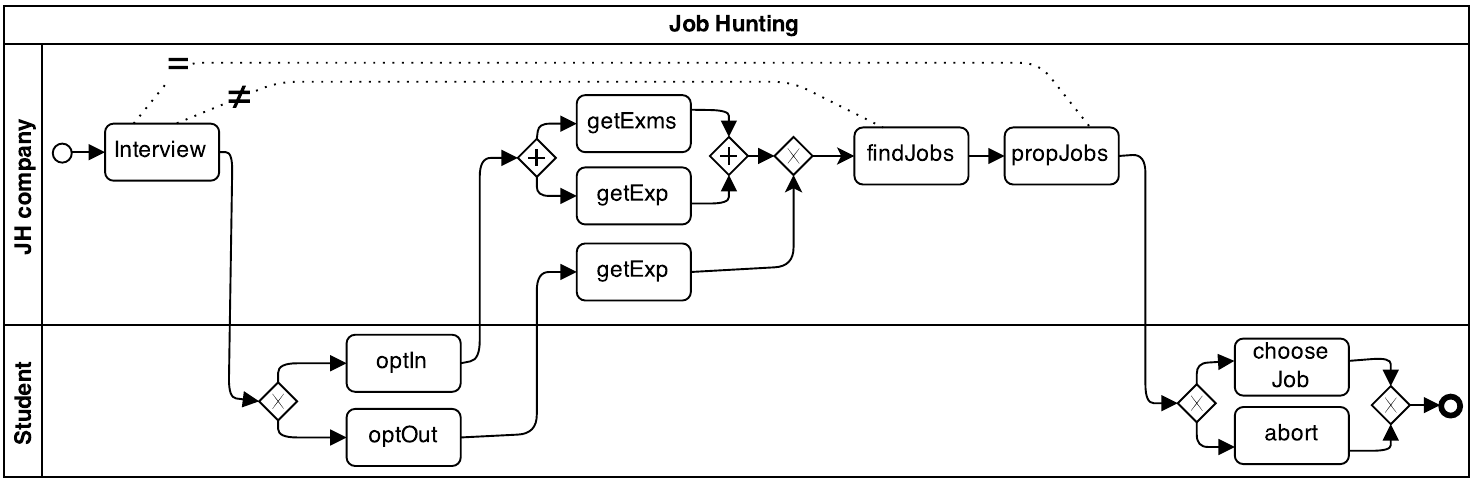}
\begin{scriptsize}
\begin{tabular}{ | C{1.5cm} | C{2.5cm} | C{2.7cm} | C{3cm} | C{1.8cm} | }
\hline
  $\msf{interview}$
  &
$\begin{array}{c}
\Diamond \msf{optIn} \lor \Diamond \msf{optOut}
\\ \land \\ 
\neg (\Diamond \msf{optIn} \land \Diamond \msf{optOut})
\end{array}$ 
&
$\begin{array}{c} 
(\neg(\msf{getExms} \lor \msf{getExp}) \\ \Until \\
(\msf{optIn} \lor \msf{optOut})) \\ \land \\
(\neg \msf{getExsm} \Until \msf{optIn})
\end{array}$
&
$\begin{array}{c}
(\neg \msf{findJobs} \\ \Until \\
(\msf{getExms} \lor
  \msf{getExp})) \\ \land \\
(\Diamond \msf{findJobs}
\\ \land \\
\msf{findJobs} \ra \Next \msf{propJobs})
\end{array}$
& 
$\begin{array}{c}
\Diamond (\msf{chooseJob} \\ \lor \\ \msf{abort})
\end{array}$ \\ \hline
\end{tabular}
\end{scriptsize}
\caption{\label{fig:workflow}The $\msf{JobHunting}$ workflow expressed
  in \bpmn (upper half) and as a (partial) set of \LTL
  formulae (lower half).}
\end{figure}
JH has deployed in the Smart Campus platform the service depicted of
Figure~\ref{fig:workflow} (upper half),
specified in the Business Process Model and Notation (\bpmn).
The swim lane labelled `Student' contains the activities (also called
tasks) that must be executed under the responsibility of the data
owner and the swim lane labelled `JH company' shows the activities that
employees of JH are supposed to perform for the purpose of finding
some jobs to the data owner.
First of all, an employee of JH performs an $\msf{interview}$ to the
student to understand his/her job preferences.  Then, the student
decides to give or not his/her consent for JH to access his/her
academic transcripts by executing either task $\msf{optIn}$ or task
$\msf{optOut}$, respectively (the empty diamond before the two tasks
in Figure~\ref{fig:workflow} is an exclusive-or gateway).
If he/she opts in, an employee of JH can access both his/her academic
transcripts and past experience (by executing both $\msf{getExms}$ and
$\msf{getExp}$ since the diamond containing the plus sign before the
two tasks in Figure~\ref{fig:workflow} is a parallel gateway);
otherwise, only the student past experiences can be accessed.  Based
on the interview and the collected personal information, an employee
of JH search for jobs the student can be interested in (task
$\msf{findJobs}$) and some other employee proposes him/her some of
them (task $\msf{propJobs}$). Finally, the students decides if
choosing one of the jobs or to abort the process (tasks
$\msf{chooseJob}$ and $\msf{abort}$, respectively).  Further
authorization constraints are imposed on which employees can execute
task $\msf{interview}$, $\msf{findJobs}$, and $\msf{proposeJobs}$: the
first two must be executed by different employees to keep the overall
process unbiased---this is called a Separation of Duty (\sod)
constraint---whereas the first and last tasks must be executed by the
same employee so that the student gets in contact with the same person
of JH---this is called a Bind of Duty (\bod) constraint.  (In
Figure~\ref{fig:workflow}, these constraints are shown as dotted lines
connecting the tasks labelled by the distinct $\neq$ or equal $=$ sign
in case of \sod or \bod, respectively.)

To summarize, the workflow specification is used to specify the
purpose of an activity (task) with respect to all the others that must
be executed for achieving the given purpose.
From now on, we assume that a workflow is uniquely associated to a
purpose or, equivalently, that the semantics of a purpose is its
associated workflow.  

Since the tasks in the workflow are executed under the responsibility
of a user (e.g., an employee of JH), he/she must have the right to
access such data.  For instance, the task $\msf{getExp}$ takes as
input the list of past job experiences of the student.  The employee
of JH executing this task must have the right to access such a list
and the student should have given the consent to access this
information to (an employee of) JH.  In other words, every time an
employee of JH asks to execute an activity for the purpose of finding
jobs, he/she not only must have the right to do so according to the
authorization policy of the company but also the student (data owner)
should agree to release the information for the purpose of
finding jobs.  In other words, there are two types of policies that
must be taken into account when granting the right to execute a task
to an employee: one is called \emph{rule-centric}, and constrains
access by considering subjects, actions, and data objects while the
other is called \emph{data-centric}, and is such that data owners
constrain access to their data objects for certain purposes only.  
For instance,
in the job hunting scenario, the rule-centric policy specifies that
employee $\msf{bob}$ has the right to $\msf{read}$ the list of job
experiencens of students and
the data-centric policy specifies that student $\msf{sam}$'s academic
transcripts
can be accessed for the purpose of $\msf{JobHunting}$.

Notice the subtle interplay between purposes, described by workflows,
and authorization policies.  
For instance, the execution of certain tasks can modify both rule- and
data-centric policies as it is the case of $\msf{optIn}$ and
$\msf{optOut}$ in Figure~\ref{fig:workflow}.  
When executing the latter, the execution of the task $\msf{getExms}$
is skipped in the current instance of the workflow despite the fact
that $\msf{sam}$ has agreed to disclose such information according to
the above data-centric policy (for which $\msf{sam}$'s academic
transcripts can always be used for the purpose of $\msf{jobHunting}$).
This flexibility allows us to model certain data directive for privacy
(see, e.g.,~\cite{EU-Directive95/46/EC}) in which the data owner must
explicitly give his/her consent to access his/her personal data, every
time it is requested.

Finally, observe that handling purposes in presence of authorization
constraints (such as \sod or \bod) requires to solve, at run-time, the
\emph{Workflow Satisfiability Problem} (\wsp)~\cite{crampton}, i.e., to
be able to answer the question: does there exist an assignment of
authorized users (according to the rule- and data-centric policies) to
workflow tasks that satisfies the authorization constraints (\sod or
\bod)?  The \wsp is known to be a computationally expensive activity;
it is already NP-hard with one \sod constraint~\cite{wang-li}.  To
make things even more complex, at run-time, we need to solve several
instances of the \wsp, one per user request of executing a task in a
workflow associated to a purpose.  This is so because to solve an
instance of the \wsp, it is sufficient to find an execution sequence of
task-user pairs containing all the tasks in a workflow.  In the job
hunting scenario, consider the initial request of an employee $u_0$ to
execute task $\msf{interview}$.  This implies to check whether there
exists a sequence $\pi=(\msf{interview}, u_0), \pi'$ of
task-user pairs form which is a successfull execution of the workflow.
For instance, if $\pi'$ is $(\msf{optOut},\msf{sam}),
(\msf{getExp},u_0), (\msf{findJobs},u_1), (\msf{propJobs},u_0),
(\msf{abort},\msf{sam})$, then $\pi$ represents a succesfull
execution of the workflow when $\msf{sam}$ decides to opt out and not
releasing his academic transcript to JH.  However, this implies only
that the \wsp is solvable and the sequence represents a possible future
execution.  Imagine now that $\msf{sam}$ decides to opt in: $\pi'$
is useless and a new instance of the \wsp must be solved to find a
successful execution of the workflow of the form $(\msf{interview},
u_0),(\msf{optIn},\msf{sam}),\pi''$ for some sequence $\pi''$.
Notice that, even if $\msf{sam}$ decides to opt out, the sequence
$\pi'$ may be useless, provided that a user $u_2$ distinct from
$u_0$ is granted the possibility to execute task $\msf{getExp}$.

\section{A Declarative Framework for Purpose-aware Policies}
\label{sec:framework}
\begin{figure}[t]
  \centering
  \begin{tabular}{cc}
  \includegraphics[scale=0.6]{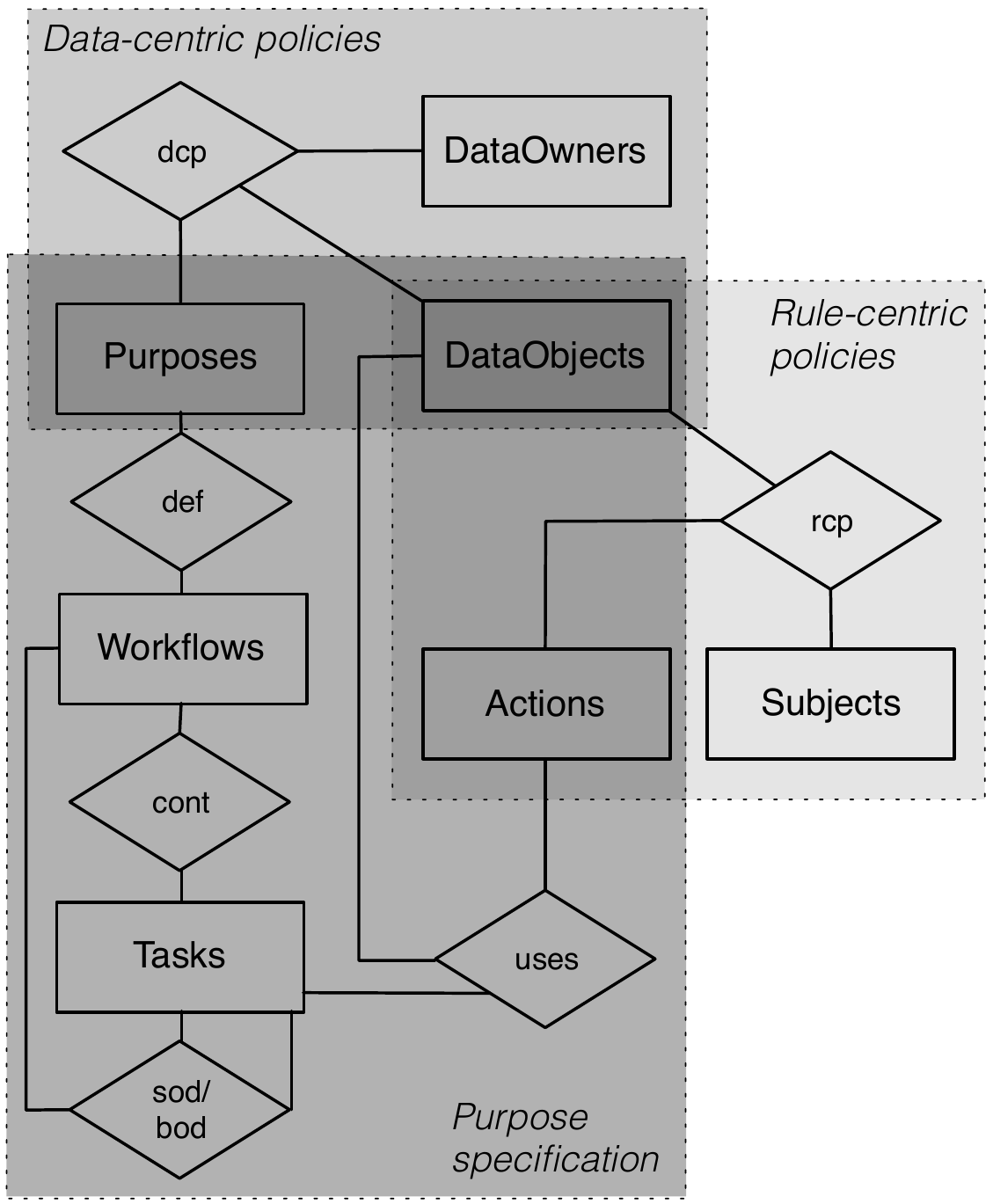} &
  \raisebox{4.25cm}{
  \hspace{-2cm}
  \begin{minipage}{.6\textwidth}
  \begin{eqnarray*}
  \mathit{dcp} &\subseteq& \mathit{DataObjects} \times 
                           \mathit{DataOwners}  \times 
                           \mathit{Purposes} \\
  \mathit{rcp} &\subseteq& \mathit{Subjects}   \times 
                           \mathit{Actions}    \times 
                           \mathit{DataObjects} \\
  && \\
  && \\
  && \\
  && \\
  && \\
  && \\
  && \\
  && \\
  && \\
  &&\\
\mathit{def}&:& \mathit{Purposes} \ra \mathit{Workflows} \\
  \mathit{cont} &\subseteq& \mathit{Workflow} \times 
                           \mathit{Tasks} \\
  \mathit{uses} &\subseteq& \mathit{Tasks} \times 
                            \mathit{Actions} \times 
                            \mathit{DataObjects} \\
  \mathit{sod} &\subseteq& \mathit{Workflows} \times 
                         \mathit{Tasks} \times 
                         \mathit{Tasks} \\
  \mathit{bod} &\subseteq& \mathit{Workflows} \times 
                         \mathit{Tasks} \times 
                         \mathit{Tasks}
  \end{eqnarray*}
  \end{minipage}}
  \end{tabular}
  \caption{\label{fig:ER}Conceptual representation of the data-, rule- and
    purpose-centric policies.}
\end{figure}
On the left of Figure~\ref{fig:ER}, it is shown an entity-relationship
diagram describing the conceptual organization of our framework
(rectangles represent sets of entities and diamonds relationships
among them).  We have $\mathit{DataOwners}$ who own
$\mathit{DataObjects}$ and decide the $\mathit{Purposes}$ for which
these can be accessed by means of a data-centric policy (relation
$\mathit{dcp}$).  $\mathit{Subjects}$ can perform certain
$\mathit{Actions}$ on $\mathit{DataObjects}$ according to a
rule-centric policy (relation $\mathit{rcp}$).  $\mathit{Purposes}$
are defined (relation $\mathit{def}$) in terms of $\mathit{Workflows}$
which are composed of $\mathit{Tasks}$ (relation $\mathit{cont}$) and
\sod or \bod constraints; each task can perform some
$\mathit{Actions}$ on $\mathit{DataObjects}$ (relation
$\mathit{uses}$).

On the right of Figure~\ref{fig:ER}, it is shown the formal
characterization of the relationships as subsets of the cartesian
products of the appropriate sets of entities.  To illustrate, recall
the running example in Section~\ref{sec:motivations}:
\begin{itemize}
\item the rule- and data-centric policies ``$\msf{bob}$ has the right to
  read the list of job experiencens of students'' and ``$\msf{sam}$'s
  academic transcripts can be accessed for the purpose of
  $\msf{JobHunting}$'' can be specified by relations $\mathit{rcp}$
  and $\mathit{dcp}$ which are such that\footnote{Given an $n$-ary
    relation $R$, we write $R(e_1, ..., e_n)$ for $(e_1, ..., e_n)\in
    R$.} $\mathit{rcp}(\msf{bob}, \msf{read}, \msf{jobExpList})$ and
  $\mathit{dcp}(\msf{academicTranscript}, \msf{sam},
  \msf{jobHunting})$;
\item if $\varphi$ is the specification of the workflow in the upper
  half of Figure~\ref{fig:workflow} (we explain below what is
  $\varphi$), then $\mathit{def}(\msf{JobHunting})=\varphi$ (notice
  that $\mathit{def}$ is a total function from $\mathit{Purposes}$ to
  $\mathit{Workflows}$, i.e.\ every purpose is associated to a
  workflow);
\item the \sod and \bod contraints in Figure~\ref{fig:workflow} can be
  specified by relations $\mathit{sod}$ and $\mathit{bod}$ such that 
  $sod(\varphi, \msf{interview}, \msf{findJobs})$ and $bod(\varphi,
  \msf{interview}, \msf{propJobs})$;
\item the fact that, for example, tasks $\msf{interview}$ and
  $\msf{optIn}$ are part of the worflow specification $\varphi$ can be
  specified by a relation $\mathit{cont}$ such that
  $\mathit{cont}(\varphi, \msf{interview})$ and
  $\mathit{cont}(\varphi, \msf{optIn})$;
\item the fact that the task $\msf{interview}$ reads the user profile
  can be specified by a relation $\mathit{uses}$ such that 
  $\mathit{uses}(\msf{interview}, \msf{read}, \msf{UserProfile})$;
\end{itemize}
We now explain how we specify workflows in our framework.  Following
the declarative approach in~\cite{AalstPS09}, we have chosen
Linear-time Temporal Logic (\LTL) as the specification language.  The
main reason for this is two-fold.  First, well-known techniques (see,
e.g.,~\cite{kroger-merz}) are available to translate procedural
descriptions of workflows (e.g., that in the upper half of
Figure~\ref{fig:workflow}), and more in general concurrent systems, to
\LTL formulae.  For instance, the lower half of
Figure~\ref{fig:workflow} shows an (incomplete) set of \LTL formulae
(to be read in conjunction) corresponding to the \bpmn workflow in the
upper half.  The first conjunct on the left means that
$\msf{interview}$ must be the first task to be executed, formula
$\neg \msf{getExms} \Until \msf{optIn}$ in the third conjunct of the
figure means that the academic transcripts cannot be accessed if the
student has opted out. Formula
$\neg \msf{findJobs} \Until (\msf{getExms} \vee \msf{getExp})$ in the
fourth conjunct means that the execution of task $\msf{findJobs}$ must
not happen before the execution of $\msf{getExms}$ or $\msf{getExp}$.

The second reason for chosing \LTL to specify workflows is that it
allows us to derive a precise semantics of purpose-aware policies and
to reuse available techniques for the off-line and on-line
verification of formulae to support the analysis of policies at
design-time and their enforcement at run-time.  Such verification
tasks are presented in Section~\ref{sec:verification}.  Here we focus
on the semantics of purpose-aware policies, starting with the
meaning of \LTL formulae, which are expressions of the following
grammar:
\[\varphi ::= a \mid \lnot \varphi \mid \varphi_1\land
\varphi_2 \mid \Next\varphi \mid \varphi_1\Until\varphi_2 \mid
\Box \varphi \mid \Diamond \varphi \text{~~~with $a\in\Prop$}\]
where $\Prop$ is a set of Boolean variables representing tasks.
Intuitively, $\Next\varphi$ means that $\varphi$ holds at the
\emph{next} instant, $\varphi_1\Until\varphi_2$ means that at some
future instant $\varphi_2$ will hold and \emph{until} that point
$\varphi_1$ holds,
$\Box \varphi$ means that $\varphi$ always holds, and its dual
$\Diamond \varphi$ that $\varphi$ eventually holds.
Since we assume workflows to \emph{eventually terminate}, we adopt the
finite-trace semantics in~\cite{DeGiacomoMGMM14,DeGiacomoDMM14}.  The
only notable aspect of this semantics (with respect to the standard
semantics as given in, e.g.,~\cite{kroger-merz}) is that
$\Next \varphi$ is true iff a next state actually exists \emph{and} it
satisfies $\varphi$.  The models of \LTL formulae are finite sequences
of Boolean assignments to the variables in $\Prop$ indexed over
natural numbers, which represent instants of a linear and discrete
time.  The idea is that at a certain time instant, the Boolean
variable representing a task is assigned to $\mathit{true}$ iff the
corresponding task has been executed.
As customary in workflow specifications, we assume that one task only
is executed at a time.
By looking at a sequence of
Boolean assignments satisfying a formula, we can thus understand which
tasks have been executed and which are not.  In other words, a
sequence $\Pi$ of Boolean assignments satisfying a formula $\varphi$, in
symbols $\Pi \models \varphi$, correspond to a possible execution of the
workflow described by $\varphi$
(see~\cite{DeGiacomoMGMM14,DeGiacomoDMM14} for a precise definition).
The set of all sequences satisfying a formula $\varphi$, i.e.\ the set
of all successful executions of the workflow described by $\varphi$,
is called its \emph{language} and is denoted by
$\mathcal{L}(\varphi)$.
It is possible to build an \emph{automaton} (i.e., a finite-state
machine) from a formula $\varphi$ accepting exactly all the traces
belonging to $\mathcal{L}(\varphi)$; see
again~\cite{DeGiacomoMGMM14,DeGiacomoDMM14} for the description of the
procedure for doing this.

We can now define the notion of purpose-aware policy as a tuple
\[\begin{array}{lcl}
\P & = & (\mathit{DataOwners}, \mathit{Subjects}, \mathit{DataObjects},
                             \mathit{Actions}, \mathit{Tasks}, \\
    ~& ~& \mathit{Workflows}, \mathit{Purposes}, \mathit{dcp}, \mathit{rcp}, 
                             \mathit{sod}, \mathit{bod},
          \mathit{cont}, \mathit{def})
  \end{array} \]
whose components are as explained above.

As it is standard in access control models, we introduce the notion of
a request
as a tuple 
$(\mathit{wid}, \mathit{sub}, \mathit{tsk}, \mathit{do}, \mathit{p})$
where $\mathit{sub}\in \mathit{Subjects}$, $\mathit{tsk}\in
\mathit{Tasks}$, $\mathit{do}\in \mathit{DataOwners}$, $\mathit{p}\in
\mathit{Purposes}$, and $\mathit{wid}$
belongs to the set $\mathit{Wid}$ of workflow identifiers (allowing us
to distinguish among different executions of possibly the same
workflow).  Intuitively, $(\mathit{wid}, \mathit{sub}, \mathit{tsk},
\mathit{do}, p)$ means that subject $\mathit{sub}$ asks the permission
to execute task $\mathit{tsk}$ on the data objects owned by the data
owner $\mathit{do}$ in the workflow instance $\mathit{wid}$ for the
purpose $p$.  The relation $\mathit{req} \subseteq \mathit{Wid} \times
\mathit{Subjects} \times \mathit{Tasks} \times \mathit{DataOwner}
\times \mathit{Purposes}$ contains all possible requests.

\subsection{Semantics of purpose-aware policies}
\label{subsec:semantics}
We explain how a request $(\mathit{wid}, \mathit{sub}, \mathit{tsk},
\mathit{do}, p)$ is granted or denied according to a purpose-aware
policy $\P$.  The idea is to derive a first-order \LTL formula from
the \LTL formula $\mathit{def}(p)$ constraining the ordering of
requests such that the rule-, data-centric
policies and \sod and \bod constraints are satisfied.  Thus,
instead of sequences of Boolean assignments, we consider first-order
models which differ for the interpretation of requests only.  For
the sake of brevity, we do not give a formal semantics of first-order
\LTL on finite-traces but only some intuitions and refer the
interested reader to~\cite{DeMasellisS13} for the details.

First of all, we observe that every workflow instance can be
considered in isolation since the framework presented above allows one
to specify only constraints within a workflow instance and not accross
instances.  For this reason, we introduce an operator to identify
requests referring to the same workflow instance $\msf{wid}$ out of a
trace $\Pi$ containing requests referring to arbitrary workflow
instances, i.e.\ $\Pi|_{\msf{wid}} = req_1(\msf{wid},\mathit{sub}_1,
\mathit{tsk}_1, \mathit{do}_1, \msf{p}),
\ldots, req_n(\msf{wid},\mathit{sub}_n, \mathit{tsk}_n, \mathit{do}_n,
\msf{p})$ is the trace representing the evolution of the specific
workflow instance $\msf{wid}$.  Notice that requests in
$\Pi|_{\msf{wid}}$ share the same workflow identifier $\mathsf{wid}$
and purpose $\mathsf{p}$ whereas subjects, tasks, and data owners may
be different.
Given a purpose-aware policy $\P$, for each purpose $\msf{p} \in
Purposes$ such that $def(\msf{p}) = \varphi$, we build a (first-order)
\LTL formula $\Phi_{\msf{p}}:=\varphi\wedge \Lambda\wedge \Sigma\wedge
B$ where
\begin{align*}
  \Lambda := &
   \bigwedge_{\mathit{cont}(\varphi,t)} 
    t \leftrightarrow 
    \exists sub,do.
    \left(
    \begin{array}{l}
      req(\msf{wid}, sub, t, do, \msf{p}) \land \\
     \bigwedge_{uses(act, t, obj)} 
      dcp(do, obj,\msf{p}) \land rcp(sub, act, obj)
    \end{array}
    \right)  \\
   \Sigma :=& \bigwedge_{sod(\varphi, t_1, t_2)} \xi(t_1,t_2,=)\wedge \xi(t_2,t_1,=) 
    \\
   B :=& \bigwedge_{bod(\varphi, t_1, t_2)} \xi(t_1,t_2,\neq)\wedge \xi(t_2,t_1,\neq)  \\
   \xi(t,t',\bowtie)  :=   &
    \Box \forall sub,sub'.\left(
    \begin{array}{l}
       \forall do.req(\msf{wid}, sub, t, do, \msf{p}) \land \\
         \Diamond \forall do.req(\msf{wid}, sub', t', do, \msf{p}) 
    \end{array} 
    \right) \ra sub \bowtie sub' \, .
\end{align*}
Formula $\Lambda$ says that in order to execute task $\msf{t}$ for
purpose $\msf{p}$, we need to check that subject $\mathit{sub}$ who
has requested to execute it is entitled to do so according to both the
rule- and data-centric policies in $\P$\boldred{, thus formalizing the
  check (\textbf{C1}) in the introduction}.  Formulae $\Sigma$ and $B$
encode the \sod and \bod constraints in $\P$, respectively, which are
both derived from the same template formula $\xi(t,t',\bowtie)$,
saying that if a request for executing $t'$ is seen after that for
executing $t$, then the two subjects performing such tasks must be
either different (when $\bowtie$ is $\neq$, i.e., in case of a \sod)
or equal (when $\bowtie$ is $=$, i.e., in case of a \bod). \boldred{Formulae
$\varphi$, $\Sigma$ and $B$, thanks to their temporal
characterization, formalize the check (\textbf{C2}) in the
introduction.}
A sequence of requests $\Pi|_{\msf{wid}}$ for a purpose $\msf{p}$ and
an instance $\msf{wid}$ of the workflow $\mathit{def}(\msf{p})$
\emph{satisfies} the purpose-aware policy $\P$ iff $\Pi|_{\msf{id}}
\models \Phi_{\msf{p}}$. 
By abusing notation, we write $\mathcal{L}(\varphi)$ for all such
sequences.  Given a sequence $\sigma$ of (previous) requests, a (new)
request
$r=(\msf{wid}, \mathit{sub}, \mathit{tsk}, \mathit{do}, \msf{p})$ is
\emph{granted} by the purpose-aware policy $\P$ iff $\msf{wid}$ is an
instance of the workflow $\mathit{def}(\msf{p})$ and there exists a
sequence $\sigma'$ of requests such that $\sigma, r,\sigma'$ is in
$\mathcal{L}(\varphi)$; otherwise, it is \emph{denied}.

To illustrate some of the notions introduced above, let us consider
the first-order \LTL formula that can be derived from the example in
Section~\ref{sec:motivations}.  As already observed, the formula
$\varphi$ associated to the purpose $\msf{JobHunting}$ is the
conjunction of the formulae in the lower half of
Figure~\ref{fig:workflow}.  The conjunct in $\Lambda$ for
$\msf{interview}$ is
\begin{eqnarray*}
\msf{interview} &\leftrightarrow & 
 \exists sub,do.\left(
 \begin{array}{l}
   req(\msf{wid},sub, \msf{interview}, do, \msf{jobHunting}) \land\\
   dcp(do, \msf{userProfile},\msf{jobHunting}) \land \\
   rcp(sub, \msf{read},\msf{userProfile})
 \end{array}
 \right)\, .
\end{eqnarray*}
The formula representing the \sod constraint between $\msf{interview}$
and $\msf{findJobs}$ is
\begin{eqnarray*}
  \Box \forall sub,sub'.\left(
    \begin{array}{l}
       \forall do.req(wid, sub, \msf{interview}, do, \msf{jobHunting}) \land \\
         \Diamond \forall do.req(wid, sub', \msf{findJobs}, do, \msf{jobHunting}) 
    \end{array} 
    \right) \ra sub \neq sub' & \wedge \\
  \Box \forall sub,sub'.\left(
    \begin{array}{l}
       \forall do.req(wid, sub, \msf{findJobs}, do, \msf{jobHunting}) \land \\
         \Diamond \forall do.req(wid, sub', \msf{interview}, do, \msf{jobHunting}) 
    \end{array} 
    \right) \ra sub \neq sub' & .
\end{eqnarray*}
Notice that the second conjunct above can be dropped without loss of
generality since, from $\mathit{def}(\msf{jobHunting})$, it is
possible to derive that it is never the case that task
$\msf{findJobs}$ is executed before task $\msf{interview}$.  From a
complete specification of the purpose-aware policy $\P$ for the
running example, it is not difficult to see that the request
$r_0=(\msf{wid},\msf{bob}, \msf{interview}, \msf{sam},
\msf{jobHunting})$ is granted by applying the definition given above
as follows.  First, take $\sigma$ to be the empty sequence as $r_0$ is
the first request.  Second, we can derive that task $\msf{interview}$
can be executed from the formula for $\msf{interview}$ above and the
fact that $\P$ is such that $\mathit{dcp}(\msf{sam},
\msf{userProfile},\msf{jobHunting})$, i.e.\ $\msf{sam}$ decided to
release his user profile for the purpose of job hunting, and
$\mathit{rcp}(\msf{bob}, \msf{read},\msf{userProfile})$,
i.e.\ $\msf{bob}$ has the right to read user profiles.  Third, take
$\sigma'=r_1, r_2, r_3, r_4, r_5$ for
{\small
\begin{align*}
 r_1 := (\msf{wid},\msf{sam}, \msf{optOut}, \msf{sam}, \msf{jobHunting}) & &
 r_2 := (\msf{wid},\msf{bob}, \msf{getExp}, \msf{sam}, \msf{jobHunting})\\
 r_3 := (\msf{wid},\msf{adam}, \msf{findJobs}, \msf{sam}, \msf{jobHunting}) & &
 r_4 := (\msf{wid},\msf{bob}, \msf{propJobs}, \msf{sam}, \msf{jobHunting}) \\
 r_5 := (\msf{wid},\msf{sam}, \msf{choosJob}, \msf{sam}, \msf{jobHunting}) & &
\end{align*}}
where $r_1$ corresponds to the fact that $\msf{sam}$ has opted out and
only his past experiences can be released, $r_2$ to the fact that
$\msf{bob}$ can retrive $\msf{sam}$'s past experiences (as said in
Section~\ref{sec:motivations}), $r_3$ to the fact that the jobs for
$\msf{sam}$ are found by $\msf{adam}$, who is distinct from
$\msf{bob}$ in order to satisfy the \sod constraint between
$\msf{interview}$ and $\msf{findJobs}$ in Figure~\ref{fig:workflow}
(indeed, we assume that $\msf{bob}$ can execute $\msf{getExp}$
according to $\P$), $r_4$ to the fact that the list of found jobs is
proposed to $\msf{sam}$ by $\msf{bob}$ in order to satisfy the \bod
constraint between $\msf{interview}$ and $\msf{propJobs}$ in
Figure~\ref{fig:workflow}, and $r_5$ to the fact that $\msf{sam}$
decides to pick a job from the proposed list.

We close the Section by remarking that this technique allows to
discover inconsistencies as soon as they occur, i.e., at the earliest
possible time. This is sometimes called \emph{early detection} in the
\bpm literature, and it is a notable feature of temporal logics. To
explain the concept, assume that, according to $\P$, \emph{only}
$\msf{bob}$ can perform the activities of $\msf{jobHunting}$: when
$r_0=(\msf{wid},\msf{bob}, \msf{interview}, \msf{sam},
\msf{jobHunting})$
(or actually any other request for purpose $\msf{jobHunting}$) is
presented to the system, we are able to understand that no execution
can ever successfully complete the workflow, as sooner or later task
$\msf{findJobs}$ must be executed by someone different from
$\msf{bob}$ which however does not have the rights to do it. As a
result, $r_0$ is denied and hence $\msf{bob}$ is not granted to access
the data even if, by observing the current state, there is no evidence
yet of any violation.

The
decidability and complexity of answering requests is studied in the
following section.

\section{Policies verification}
\label{sec:verification}

We now formalize, provide a solution and give complexity results to
the following verification tasks:
\begin{compactitem}
\item[\emph{Purpose achievement problem}]: given a purpose-aware
  policy $\P$ and a purpose $\msf{p}$, is it possible to successfully
  execute workflow $\mathit{def}(\msf{p})$? That is to say: is it
  possible to assign tasks to subjects such that policy $\P$ is
  satisfied and the workflow successfully terminates?

\item[\emph{Runtime policy enforcement}]: given a purpose-aware policy
  $\P$, the current workflow execution trace $\pi$, and a new request
  $r_i$, can $r_i$ be granted or must be denied in order for the
  workflow to eventually terminate \boldred{(check (\textbf{C2}) in the
  introduction)} and such that the sequence of
  granted request always satisfies $\P$ \boldred{(check (\textbf{C1}))}?
\end{compactitem}




\subsection{Purpose achievement problem}

Technically speaking, this problem amounts to check, given a purpose
$\msf{p}$ in a purpose-aware policy $\P$, if $\Phi_{\msf{p}}$ is
satisfiable, i.e., if there exists a trace $\Pi|_{\msf{wid}}$ (where
$\msf{wid}$ is a generic identifier for workflow
$\mathit{def}(\msf{p})$) such that
$\Pi|_{\msf{wid}} \models \Phi_{\msf{p}}$.

We adopt an automata-based approach to solve the problem, which
consists in building the automaton for $\Phi_{\msf{p}}$ and check if
there exists a path to a final state.
Since $\Phi_{\msf{p}}$ is first-order, we exploit the modularity of
our framework and the of symbolic techniques in \cite{DeMasellisS13}
to build the automaton with a reasonable complexity.
Indeed, we notably separate sub-formulas in $\Phi_{\msf{p}}$ checking
the workflow ($\varphi$), checking $rcp$/$dcp$ ($\Lambda$) and
checking \sod/\bod constraints ($\Sigma$,$B$). Moreover, we observe
that while $\varphi$ and $\Sigma$,$B$ typically remain fixed as they
describe the workflow, $rcp$ and $dcp$ policies may occasionally
change.
\begin{figure}[t]
  \centering
  \includegraphics[scale=0.8]{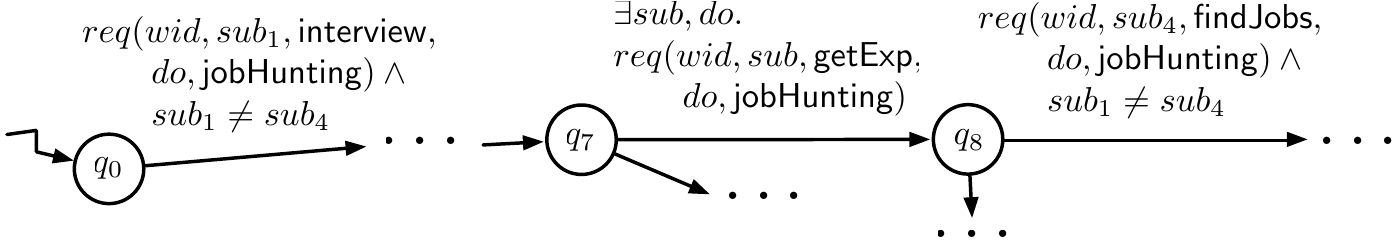}
  \caption{An excerpt of the pre-automaton for formula
    $\Phi_{\msf{jobHunting}}$.}
  \label{fig:exampleAutomaton}
\end{figure}
From these considerations, we decouple the construction of the
automaton checking the workflow constraints only, which we call
\emph{pre}-automaton, from that taking also care of $rcp$ and $dcp$,
called \emph{specialized}-automaton, so as when $rcp$ and $dcp$
change, we do not need to re-compute the automaton from scratch, but
we just ``adapt'' the already computed pre-automaton to the new
policies.

The pre-automaton is a finite-state machine with edges labeled with
first-order formulas. Figure~\ref{fig:exampleAutomaton} shows an
excerpt of the pre-automaton for $\Phi_{\msf{jobHunting}}$, where we
notice that variables for subjects involved in \sod or \bod
constraints are \emph{free}, i.e., not bounded by any
quantifier. Indeed, as tasks $\msf{interview}$ and $\msf{jobHunting}$
must be executed by different subjects (\sod), variables for such
subjects, namely $sub_1$ and $sub_4$, are free and must be different.


\begin{theorem}
  Given a purpose-aware policy $\P$, and a purpose $\msf{p}$ with
  $\mathit{def}(\msf{p})=\varphi$, the construction of the
  \emph{pre-automata} $\ol{A}_{\msf{p}}$ requires exponential space in
  the number of temporal operators of $\varphi$, $sod$ and $bod$.
  \label{th:automaton}
\end{theorem}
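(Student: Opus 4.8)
The plan is to reduce the claim to the standard singly-exponential automaton construction for \LTLf on finite traces~\cite{DeGiacomoMGMM14,DeGiacomoDMM14,DeMasellisS13}, applied to the fragment $\psi := \varphi \wedge \Sigma \wedge B$ of $\Phi_{\msf{p}}$ for which the pre-automaton is responsible (that is, $\Phi_{\msf{p}}$ with the component $\Lambda$ omitted, since the $rcp$/$dcp$ checks are deferred to the specialized automaton). First I would recall that this construction builds states from the (Fischer--Ladner-style) closure of the formula: a state is a maximal consistent set of subformulas, and successive states are linked by a transition relation checking the one-step unfolding of the temporal operators $\Next$, $\Until$, $\Box$ and $\Diamond$. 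The decisive observation is that propositional connectives are resolved \emph{within} a state by its consistency requirement and therefore do not contribute to the branching across steps; only the temporal subformulas must be remembered from one instant to the next. Consequently the number of reachable states, and hence the space needed to store the automaton, is bounded by $2^{O(k)}$, where $k$ is the number of temporal operators occurring in $\psi$.

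Next I would count $k$. By definition $\Sigma$ is the conjunction, over each pair in $sod(\varphi,\cdot,\cdot)$, of two instances of the template $\xi$, and likewise $B$ is a conjunction of two $\xi$-instances per pair in $bod(\varphi,\cdot,\cdot)$. Since each $\xi(t,t',\bowtie)$ contains exactly one $\Box$ and one $\Diamond$, the constraints $\Sigma$ and $B$ together introduce $O(|sod| + |bod|)$ temporal operators, independently of the domains of subjects and data owners. Adding the temporal operators already present in $\varphi$ yields $k = O(t_\varphi + |sod| + |bod|)$, with $t_\varphi$ the number of temporal operators of $\varphi$, so $2^{O(k)}$ is exponential precisely in the number of temporal operators of $\varphi$, $sod$ and $bod$, as claimed.

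The step I expect to be the main obstacle is justifying that the first-order character of $\Sigma$ and $B$ does not inflate the state count by the size of the domain. Here I would invoke the symbolic treatment of~\cite{DeMasellisS13}: rather than grounding the quantifiers over concrete subjects, the edges of the pre-automaton are labelled with first-order formulas, and the subject variables tied by the $\sod$/$\bod$ templates are kept \emph{free}, as illustrated in Figure~\ref{fig:exampleAutomaton}. Because at most one task is executed per instant and each such free variable merely records ``the subject who performed task $t$'', the constraint $sub \bowtie sub'$ imposed by $\xi$ can be propagated as an edge label instead of being encoded into the states. Thus the domain size influences only the alphabet and the formulas decorating the transitions, never the number of states, which remains governed solely by the temporal structure. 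Combining this with the count of $k$ above gives the stated exponential-space bound, and I would conclude by observing that the construction can be carried out within this space by generating states on the fly from the closure of $\psi$.
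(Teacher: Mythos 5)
Your proposal is correct and takes essentially the same route as the paper's own proof sketch: drop the $\Lambda$ ($rcp$/$dcp$) conjunct, ignore the quantifiers of $\Sigma$ and $B$ treating their atoms as propositional symbols with free subject variables on symbolic edges, and invoke the propositional \LTLf automaton construction of~\cite{DeGiacomoMGMM14}, whose state space is exponential only in the number of temporal operators, all of which come from $\varphi$, $sod$ and $bod$. Your explicit count of one $\Box$ and one $\Diamond$ per $\xi$-instance and the remark about on-the-fly state generation simply make explicit what the paper leaves implicit.
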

\begin{proofsk}
  Automaton $\ol{A}_{\msf{p}}$ is built from $\Phi_{\msf{p}}$ using
  the algorithm for the propositional case over finite traces
  \cite{DeGiacomoMGMM14}, by simply ignoring constraint
  $\bigwedge_{uses(act, t, obj)} dcp(do, obj,\msf{p}) \land rcp(sub,
  act, obj)$
  and quantifiers $\forall sub_1, sub_2$ of sub-formulas $\Sigma$ and
  $B$ and considering first-order atoms as propositional symbols (see
  \cite{DeMasellisS13} for details). Observing that $\varphi$,
  $\Sigma$ and $B$ are the only sub-formulas which contains temporal
  operators, and hence, the only source of complexity, and recalling
  that building the automaton for a propositional formula requires
  exponential space in the number of temporal operators
  \cite{DeGiacomoMGMM14}, i.e., $2^{|\varphi, bod, sod|}$ in the worst
  case, the claim follows.
\end{proofsk}

A pre-automaton can be specialized to take into account a specific set
of $rcp$ and $dcp$ by simply adding, to each edge labeled with
$req(wid, sub, \msf{t}, do, \msf{p})$, the constraint
$\bigwedge_{act,obj \in uses(act, \msf{t}, obj)} (dcp(do, obj,
\msf{p}) \land rcp(sub, act, obj))$.
Notice that such an operation is legitimate as sub-formulas
($\Lambda$) checking $rcp$ and $dcp$ do not contain any temporal
operators. The resulting specialized-automaton $A_{\msf{p}}$ is a
symbolic structure where we check whether there is a way to reach a
final state---thus solving a workflow satisfiability problem---by
trying to satisfy formulas on edges. Indeed, satisfy a formula
(w.r.t. $\P$) precisely means assigning a task to a subject which is
authorized by $\P$ to perform it. Consider, e.g., formula
$\exists sub, do.req(wid, sub, \msf{findJobs}, do, \msf{jobHunting})$
on edge from $q_7$ to $q_8$ in Figure~\ref{fig:exampleAutomaton}: its
corresponding formula in the specialized automaton is
$\exists sub, do.req(wid, sub, \msf{findJobs}, do, \msf{jobHunting})
\bigwedge_{uses(act, \msf{findJobs}, obj)} dcp(do, obj,\msf{p}) \land
rcp(sub, act, obj)$.
Assuming that $\msf{bob}$ has the rights to perform $\msf{getExp}$,
the substitution $\msf{bob}/sub$ satisfy the formula according to
$\P$, and so that edge can be use to build a path to a final
state. Analogously, the assignment $\msf{bob}/sub_1$ and
$\msf{adam}/sub_4$ satisfies formulas on edges from $q_0$ and $q_8$
respectively. When no such an assignment can be found, the workflow
cannot be successfully completed given policy $\P$.

\begin{theorem}
  Given a finite purpose-aware policy $\P$ and a purpose $\msf{p}$
  with $\mathit{def}(\msf{p})=\varphi$ the \emph{purpose achievement
    problem} can be solved in exponential time in the size of
  $\varphi$, $sod$ and $bod$.
\label{th:WSP-complexity}
\end{theorem}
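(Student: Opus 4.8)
The plan is to combine the pre-automaton construction from Theorem~\ref{th:automaton} with the specialization procedure described immediately above, and then bound the cost of searching the resulting symbolic structure for a path to a final state. Concretely, solving the purpose achievement problem amounts to deciding satisfiability of $\Phi_{\msf{p}}$, which by the automata-based approach reduces to checking reachability of a final state in the specialized automaton $A_{\msf{p}}$.

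\medskip\noindent\textbf{Step 1.} First I would invoke Theorem~\ref{th:automaton} to build the pre-automaton $\ol{A}_{\msf{p}}$, whose number of states is $2^{O(|\varphi,\mathit{sod},\mathit{bod}|)}$, i.e.\ exponential in the number of temporal operators of $\varphi$, $\mathit{sod}$, and $\mathit{bod}$. The key point is that this is the only exponential blow-up in the whole pipeline, since by the theorem $\varphi$, $\Sigma$, and $B$ are the sole sources of temporal operators.

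\medskip\noindent\textbf{Step 2.} Next I would specialize $\ol{A}_{\msf{p}}$ into $A_{\msf{p}}$ by attaching to each edge the conjunct $\bigwedge_{uses(act,\msf{t},obj)}(dcp(do,obj,\msf{p})\land rcp(sub,act,obj))$. As already noted, this step adds no temporal operators and merely decorates each edge with a first-order constraint over $\mathit{dcp}$ and $\mathit{rcp}$; hence the state space is unchanged and the specialization runs in time polynomial in the size of $A_{\msf{p}}$ and of $\P$.

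\medskip\noindent\textbf{Step 3.} Finally I would search $A_{\msf{p}}$ for a path from the initial to a final state along which every edge formula is satisfiable with respect to $\P$. Since $\P$ is finite, checking whether a single edge formula such as $\exists sub,do.\,req(\msf{wid},sub,\msf{t},do,\msf{p})\land\bigwedge_{uses(act,\msf{t},obj)}(dcp(do,obj,\msf{p})\land rcp(sub,act,obj))$ is satisfiable is a matter of testing finitely many ground substitutions of subjects and data owners, which is polynomial in $|\P|$; the free variables for subjects tied by \sod/\bod constraints are handled by carrying a partial assignment along the path and enforcing the $\neq$/$=$ requirements. A reachability search (e.g.\ BFS/DFS over the product of the automaton with the bounded space of partial subject assignments) visits each state at most a polynomial number of times, so the whole search runs in time polynomial in $|A_{\msf{p}}|$ and $|\P|$. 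Composing the three steps, the dominant cost is the exponential size of $A_{\msf{p}}$ from Step~1, and the claimed exponential-time bound in the size of $\varphi$, $\mathit{sod}$, and $\mathit{bod}$ follows.

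\medskip\noindent I expect the main obstacle to be making precise that the reachability search remains polynomial in the automaton size despite the free subject variables introduced by \sod/\bod constraints: one must argue that it suffices to track, along each path, only the bounded set of subject identities already committed to the relevant tasks, rather than all possible future assignments, so that the search does not incur a second exponential blow-up on top of $|A_{\msf{p}}|$.
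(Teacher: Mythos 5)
Your Steps 1 and 2 match the paper's construction exactly (pre-automaton via Theorem~\ref{th:automaton}, then specialization of edges with the $dcp$/$rcp$ conjuncts). The gap is in Step 3, and it sits precisely at the point you yourself flag as ``the main obstacle.'' You claim that the reachability search over the product of $A_{\msf{p}}$ with the space of partial subject assignments ``runs in time polynomial in $|A_{\msf{p}}|$ and $|\P|$,'' and in your closing paragraph you hope to show that tracking only the subject identities already committed avoids ``a second exponential blow-up.'' Neither holds. The free across-state variables introduced by \sod/\bod constraints (such as $sub_1$ and $sub_4$ in Figure~\ref{fig:exampleAutomaton}) occur on edges throughout the whole automaton, so a sound search must distinguish product states by the assignment carried so far, and the number of such (partial) assignments is of order $|Subjects|^{|sod,bod|}$ --- exponential in the number of \sod/\bod constraints, and not polynomially bounded in $|A_{\msf{p}}|$ and $|\P|$, since the exponent grows with the input. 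Tracking ``only the committed identities'' is exactly this partial-assignment product; it does not shrink it.

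The saving observation --- and this is how the paper's own proof goes --- is that this second exponential factor does not need to be avoided, because it is exponential only in the size of $sod$ and $bod$, which the theorem's statement permits. The paper separates local variables (each occurring in a single edge formula, checkable by querying the finite $\P$) from across-state variables (those generated by \sod/\bod), enumerates all $|Subjects|^{|sod,bod|}$ ground substitutions of the latter, and runs one reachability test, linear in the size of the automaton, per substitution, for a total of $|Subjects|^{|sod,bod|} \cdot 2^{|\varphi,bod,sod|}$. Your on-the-fly product search is algorithmically equivalent and attains the same bound; to repair your argument, drop the polynomiality claim, account for the $|Subjects|^{|sod,bod|}$ factor explicitly, and note that the product of the two exponentials is still a single exponential in the size of $\varphi$, $sod$ and $bod$, which is all the theorem requires.
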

\begin{proof}
  Given a specialized automaton $A_{\msf{p}}$, in the worst case all
  possible assignments to variables in edges must be checked. Such
  assignments are finite as the purpose $\P$ is finite and exponential
  in the number of variables. Actually, it can be proven (see
  \cite{DeMasellisS13}) that it is enough to try substitutions for
  subjects involved in \sod and \bod constraints only, which usually
  are of a much smaller number. This is because \sod and \bod
  constraints generate variables which occur throughout the whole
  automaton (across-state variables) and are the only source of
  complexity (such as $sub_1$ and $sub_4$ in
  Figure~\ref{fig:exampleAutomaton}). Any other variable, instead, is
  local, i.e., it occurs in a specific formula only, whose formula can
  be checked for satisfiability by simply querying $\P$.
  For each substitution to across-state variables a reachability test
  to a final state (linear in the size of the automaton) must be
  performed. From Theorem~\ref{th:automaton}, we get time complexity
  of $|Subjects|^{|sod, bod|} \cdot 2^{|\varphi, bod, sod|}$.
\end{proof}



\subsection{Runtime policies verification}

Given a sequence $\pi$ of (previous) requests and a new request $r$,
should we allow $r$ or not?

Traditional \LTL semantics presented in the previous Section is not
adequate for evaluating requests at runtime, as it considers the trace
$\Pi$ seen so far to be \emph{complete}. Instead, we want to evaluate
the current request by considering that the execution could still
continue and this evolving aspect has a significant impact on the
evaluation: at each step, indeed, the outcome may have a degree of
uncertainty due to the fact that future executions are yet unknown.

Consider, e.g., that so far request $r_0$ has been granted, where
$r_0:=(\msf{wid},\msf{bob}, \msf{interview}, \msf{sam},
\msf{jobHunting})$
is as in the previous Section. Assume that now request
$r_1 := (\msf{wid},\msf{sam}, \msf{optOut}, \msf{sam},
\msf{jobHunting})$
is presented and must be evaluated: we may be tempted to use the
traditional \LTL semantics, which returns
$\pi: r_0, r_1 \not \models \Phi_{\msf{p}}$ because some constraints
have not been satisfied (such as: ``eventually $\msf{findJobs}$ must
be executed'') but this is not a good reason to deny request $r_1$, as
the workflow execution will not stop after $r_1$ (and actually any
other single request different from $r_1$ would not have satisfied
$\Phi_{\msf{p}}$ anyway).

A more complex analysis is hence required, which assesses the
capability of a partial trace to satisfy or violate a formula
$\varphi$ \emph{in the future} by analyzing whether it belongs to the
set of \emph{prefixes} of $\L(\varphi)$ and/or the set of prefixes of
$\L(\neg \varphi)$. Roughly speaking, let $\pi: r_0, r_1$ be as before:
we want to check if there exists a possible sequence of future
requests $\pi': r_2, r_3, \ldots r_n$ such that
$\pi,\pi' \models \Phi_{\msf{p}}$ and, if this is the case, we grant
request $r_1$. We can actually be more precise, and evaluate the
current request in four different ways.

Given a (partial) trace $\pi$,
a formula $\Phi_{\msf{p}}$ and a new request $r$, we adopt the runtime
semantics in \cite{DeGiacomoMGMM14} which is such that:
\begin{compactitem}
\item $\pi, r \models [\Phi_{p}]_{\rv}=\temptrue$, when $\pi, r$
  \emph{temporarily satisfies} $\Phi_{p}$, i.e., $\pi$ is currently
  compliant with $\Phi_{p}$, but a possible system future prosecution
  may lead to falsify $\Phi_{p}$;
\item $\pi,r \models [\Phi_{p}]_{\rv}=\tempfalse$, when that the current trace \emph{temporarily
    falsify} $\Phi_{p}$, i.e., $\pi,r$ is not current compliant with
  $\Phi_{p}$, but a possible system future prosecution may lead to
  satisfy $\Phi_{p}$;
\item $\pi,r \models [\Phi_{p}]_{\rv}=\true$, when $\pi,r$ \emph{satisfies} $\Phi_{p}$ and it will
  always do, no matter how it proceeds;
\item $\pi,r \models [\Phi_{p}]_{\rv}=\false$, when $\pi,r$
  \emph{falsifies} $\Phi_{p}$ and it will always do, no matter how it
  proceeds.
\end{compactitem}
A new request $r$ is \emph{denied} if
$\pi,r \models [\Phi_{p}]_{\rv}=\false$, and \emph{granted} otherwise.

\boldred{Intuitively, every time a new request is presented, we check
  that: \myi from the current automaton state there exists an outgoing
  edge whose formula is satisfied by the current request (which
  corresponds to check (\textbf{C1}) in the introduction) and \myii
  from the arrival state there exists a path to a final state, which
  is a \wsp that exactly corresponds to check (\textbf{C2}).}
Such analyses are performed on an automaton which is different
from the one presented in the previous Section, as it not only has to
check prefixes of $\Phi_{\msf{p}}$, but also that of
$\neg \Phi_{\msf{p}}$, in order to distinguish among the four cases above
(see \cite{DeGiacomoMGMM14} for details). However, the same idea of
the pre- and specialized automata also apply to this case.

Once the specialized-automaton has been computed, the current sequence
of requests is analyzed. Notice that, differently from the offline
verification, assignment of users to tasks are partially given by the
current and previous requests, and hence we have to check if such
partial assignments can be extended, according to policy $\P$, in order
to reach a final state.
Consider again
$r_0:=(\msf{wid},\msf{bob}, \msf{interview}, \msf{sam},
\msf{jobHunting})$
to be the already granted request and
$r_1 := (\msf{wid},\msf{sam}, \msf{optOut}, \msf{sam},
\msf{jobHunting})$
to be the current one. Request $r_0$ provides the assignment
$\msf{bob}/sub_1$ which forces us to solve the \wsp with the additional
constraint of $sub_1=\msf{bob}$. Actually,
$r_0,r_1 \models [\Phi_{p}]_{\rv}=\tempfalse$, as
$r_0,r_1 \not \models \Phi_{\msf{p}}$ but there exists a sequence of
assignments of users to tasks that eventually satisfies it, which is
sequence $r_2, r_3, r_4, r_5$ shown in the previous Section.  We
remark that a \wsp instance must be performed each time a new request
is presented. Indeed, the actual next request $\hat{r}_2$ (still
unknown at the current time) is in general different from $r_2$, and
hence sequence $r_2, r_3, r_4, r_5$ found at this step as a witness of
a possible future execution is of no use as the system progresses.
Notably, the fact that we discover inconsistencies at the earliest
possible time (early detection) allows us to block workflow executions
exactly when a possible successful path can still be followed. When
this is not guaranteed, the online enforcement is inefficient as when
the precise point of deviation from the right path is unknown: \myi
possible several tasks are executed before realizing the inconsistency
(hence several data are accessed thus breaking the security) and \myii
possibly it is too late to recover the execution.

\begin{theorem}
  \label{thm:main}
  Given a purpose-aware policy $\P$ and a purpose $\msf{p}$ with
  $\mathit{def}(\msf{p})=\varphi$, the \emph{runtime policy
    verification} requires, at each step, exponential time in the size
  of $\varphi$, $sod$ and $bod$.
\end{theorem}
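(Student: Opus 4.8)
The plan is to reduce the per-step runtime decision to a bounded family of reachability tests on a precomputed specialized automaton, exactly mirroring the offline argument, and then invoke Theorem~\ref{th:WSP-complexity}. First I would build, once and for all, the four-valued runtime pre-automaton for $\Phi_{\msf{p}}$ following~\cite{DeGiacomoMGMM14}. Since this construction must additionally track prefixes of $\neg\Phi_{\msf{p}}$ in order to separate the four verdicts $\true$, $\false$, $\temptrue$ and $\tempfalse$, it is obtained by a product on top of the automaton of Theorem~\ref{th:automaton}; I would argue that it therefore remains exponential in the number of temporal operators, i.e.\ of size $2^{|\varphi, sod, bod|}$. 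As in the offline case, this pre-automaton is specialized to the current $rcp$/$dcp$ by decorating each $req$-edge with the corresponding $dcp \land rcp$ constraint, an operation introducing no temporal operators and hence not affecting the size class. Crucially, this automaton is computed only once and reused, so its construction cost is not charged per step, although its size does enter the cost of the reachability tests below.

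Next I would analyse a single step. The history $\pi$ of previously granted requests determines a current state of the specialized automaton together with a \emph{partial} assignment of subjects to the across-state (i.e.\ $sod$/$bod$) variables. Given the new request $r$, the decision splits into the two checks highlighted before the statement: \textbf{C1} tests whether some edge leaving the current state carries a first-order label satisfied by $r$ under $\P$---this is a query to the finite policy $\P$ and, as recalled in the introduction, is answered in time linear in the size of the authorization query; \textbf{C2} tests, from each resulting arrival state, whether a final state is reachable, which is precisely a \wsp instance. The only novelty with respect to the offline setting is that the across-state variables are already partially constrained by $\pi$ (for instance $sub_1 = \msf{bob}$); I would observe that this can only shrink the space of substitutions to be enumerated, so that the bound of Theorem~\ref{th:WSP-complexity} applies unchanged.

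Combining the two checks, each step enumerates at most $|Subjects|^{|sod, bod|}$ assignments to the across-state variables, and for each of them performs a reachability test (linear in the automaton size $2^{|\varphi, sod, bod|}$) together with the efficient \textbf{C1} membership check. This yields a per-step bound of $|Subjects|^{|sod, bod|} \cdot 2^{|\varphi, sod, bod|}$, which is exponential in the size of $\varphi$, $sod$ and $bod$, as claimed. Reading off the four-valued verdict adds only a constant-factor overhead, since distinguishing $\true/\false/\temptrue/\tempfalse$ amounts to testing final-state reachability for $\Phi_{\msf{p}}$ and for $\neg\Phi_{\msf{p}}$ on the same structure.

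The step I expect to be the main obstacle is justifying \textbf{C2} under a partially fixed assignment: I would need the result of~\cite{DeMasellisS13} guaranteeing that it suffices to enumerate substitutions for the across-state variables only---local variables being discharged by a direct query to $\P$---and then verify that fixing some of these variables from the history is compatible with that argument and never forces re-enumeration of the local ones. The remaining steps (size of the four-valued automaton, linearity of \textbf{C1}) are direct transcriptions of Theorems~\ref{th:automaton} and~\ref{th:WSP-complexity} and of the introduction's remark on the efficiency of authorization-query evaluation.
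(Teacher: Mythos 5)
Your proposal is correct and takes essentially the same route as the paper's own proof: build the four-valued monitoring (specialized) automaton once, with the same complexity as Theorem~\ref{th:automaton}, and then at each step solve a \wsp instance whose cost is bounded as in Theorem~\ref{th:WSP-complexity}, yielding the per-step bound $|Subjects|^{|sod, bod|} \cdot 2^{|\varphi, sod, bod|}$. The extra care you take about the history inducing a partial assignment of the across-state variables (which only shrinks the enumeration) is precisely the point the paper handles implicitly by deferring to \cite{DeMasellisS13}.
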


\begin{proofsk}
  The construction of the specialized automaton for monitoring
  incoming requests for workflow $\mathit{def}(\varphi)$ require the
  same complexity of the automaton in Theorem~\ref{th:automaton} (see
  \cite{DeGiacomoMGMM14} and \cite{DeMasellisS13}). When a new request
  $r$ is presented, a \wsp instance must be solved, as explained in
  \cite{DeMasellisS13}. Being the complexity of \wsp as in
  Theorem~\ref{th:WSP-complexity}, the claim follows.
\end{proofsk}


\section{Discussion and Related Work}
\label{sec:related}


We have presented a declarative framework to specify and enforce
purpose-aware policies.  In the literature, several proposals have
attempted to characterize the notion of purpose in the context of
security policies.  Some of them (e.g.,~\cite{ByunL08}) propose to
manage and enforce purpose by self-declaration, i.e.\ subjects
explicitly announce the purpose for accessing data.  While this
provides a first effort to embody purpose in access control policies,
these approaches are not able to precent malicious subjects from
claiming false purposes.
Other works (e.g.,~\cite{qun-etal}) propose to extend the Role Based
Access Control model with mechanisms to automatically determine the
purpose for which certain data are accessed based on the roles of
subjects.  The main drawback of these approaches is the fact that
roles and purposes are not always aligned and members of the same role
may serve different purposes in their actions.  Other approaches
(e.g.,~\cite{ardagna-etal}) are based on extensions of (Attribute
Based) Access Control models for handling personal data in web
services or (e.g.,~\cite{purpose-distributed}) on extensions of Usage
Control models for the distributed enforcement of the purpose for data
usage (see, e.g.,~\cite{pretschner-etal}).  The main problem of these
approaches derives from the limited capability of the application
initiating the handling of personal data to control it when this has
been transferred to another (remote) application in the distributed
system.  Our framework avoids this problem by assuming a central
(workflow-based) system acting as a Policy Enforcement Point (PEP)
intercepting all requests of executing a particular action on certain
data.  Such an architecture for the PEP is reasonable in cloud-based
environments (e.g., the Smart Campus platform briefly described in
Section~\ref{sec:motivations}) in which services and applications are
implemented via of Application Programming Interfaces (APIs) so that
remote applications provide only the user interface while the control
logic is executed on the cloud platform and can thus be under the
control of the central PEP.

The common and more serious drawback of the approaches considered
above is that they fail to recognize that the purpose of an action (or
task) is determined by its position in a workflow, i.e.\ by its
relationships with other, interrelated, actions.  This observation has
been done in more recent
works---e.g.,~\cite{tschantz-etal,PetkovicPZ11,JafariEtal:ACM2014}---and
is also the starting point of our framework in this paper.  We share
the effort of formalizing the notion of purpose as a pre-requisite of
future actions.  In future work, we would like to study how
hierarchical workflows (i.e.\ workflows containing complex activities,
specified in terms of lower level tasks) can be expressed in our
framework so as to capture the specification of purposes as high-level
activities as done in~\cite{JafariEtal:ACM2014}.  The main difference
with previous works is our focus on run-time enforcement of
purpose-aware policies while~\cite{tschantz-etal}
and~\cite{PetkovicPZ11} on auditing.  Furthermore, the formal
framework adopted to develop these proposals are different from ours:
\cite{tschantz-etal} uses Markov Decision Processes,
\cite{PetkovicPZ11} a process calculus, and~\cite{JafariEtal:ACM2014}
an \emph{ad hoc} modal logic.  These choices force the authors
of~\cite{PetkovicPZ11,JafariEtal:ACM2014} to design algorithms for
policy enforcement or auditing from scratch.  For instance,
\cite{JafariEtal:ACM2014} gives a model checking algorithm on top of
which a run-time monitor for purpose-based policies can be
implemented, without studying its complexity.  In contrast, we use a
first-order temporal logic which comes with a wide range of techniques
to solve logical problems (see, e.g.,~\cite{DeGiacomoMGMM14}) that can
be reused (or adapted) to support the run-time enforcement of
purpose-aware policies.  For instance, we were able to derive the
first (to the best of our knowledge) complexity result of answering
authorization requests at run-time under a given purpose-aware policy:
exponential time in the number of authorization constraints
(Theorem~\ref{thm:main}).  This complexity is somehow intrisic to the
problem when assuming that the purpose of an action is determined by
its position in a workflow, as we do in this paper.  As a consequence,
achieving a purpose amounts to the succesful execution of the
associated workflow, which is called the Workflow Satisfiability
Problem (\wsp) in the literature~\cite{crampton} and is known to be
NP-hard already in the presence of one \sod constraint~\cite{wang-li}.
Several works have proposed techniques to solve both the
off-line~\cite{wang-li,YangXieRayLu} and the
on-line~\cite{basin2012,dossantos2015} version of the \wsp but none
considers purpose-aware policies as we do here.

Indeed, we are not the first to use first-order \LTL for the
specification of security policies.  For example, \cite{ltl-rbac}
shows how to express various types of \sod constraints in the Role
Based Access Control model by using first-order \LTL.  However, the
paper provides no method for the run-time monitoring of such
constraints and do not discuss if and how the approach can be used in
the context of workflow specifications.  Instead, the work
in~\cite{sttt-crampton} uses (a fragment of propositional) \LTL to
develop algorithms for checking the successfull termination of a
workflow.  Both works do not discuss purpose-aware policies.  To the
best of our knowledge, only the approach in~\cite{barth-etal} shares
with ours the use of first-order \LTL to specify and enforce utility
and privacy in business workflows.  The main difference is in the
long-term goal: we want to give rigorous foundation to specification
and verification techniques for purpose-based policies,
while~\cite{barth-etal} is seen as a first step towards to the
development of a general, clear, and comprehensive framework for
reducing high-level utility and privacy requirements to specific
operating guidelines that can be applied at individual steps in
business workflows.  It would be an interesting future work to see if
and how the two approaches can be combined together in order to derive
purpose-aware policies from high-level privacy requirements typically
found in laws, directives, and regulations.

Our choice of using a first-order \LTL formula
(Section~\ref{subsec:semantics}) as the semantics of purpose has two
main advantages.  First, it allows us to reuse well-known techniques
to specify access control policies, what we call data- and
rule-centric policies, by using (fragments of) first-order logic (see,
e.g.,~\cite{constraintdatalog,arkoudas-etal}).  Second, it allows us
to reuse the techniques for the specification and enforcement of
workflows put forward by the declarative approach to business process
specification in~\cite{AalstPS09,WestergaardM11,MaggiMWA11}.  However,
these works focus on tasks and their execution constraints,
disregarding the security and privacy aspects related to accessing the
data manipulated by the tasks.  A first proposal of adding the data
dimension to this approach is in~\cite{DeMasellisMM14}, which has been
from which we have borrowed the construction of the automata for
off-line and on-line verification in this paper
(Section~\ref{sec:verification}).  The choice of considering
first-order \LTL over finite instead of infinite traces goes back
to~\cite{DeGiacomoDMM14} in which it is argued that this is the right
choice for business process which are supposed to terminate as the
workflows associated purposes, which can be achieved in this way.  In
general, monitoring first-order \LTL formulae is
undecidable~\cite{BauerKV13} but, under the finite domain assumption,
\cite{DeMasellisS13} shows decidability.  Such an assumption is
reasonable in our framework where subjects are usually employees of a
company (e.g., the job hunting organization in
Section~\ref{sec:motivations}) whose number is bounded.  Our
verification techniques are also related to mechanisms for enforcing
security policies, such
as~\cite{pretschner-etal,basin-monitoring-sec,schneider}.  However,
these works mainly focus on access or usage control policies and are
of limited or no use for purpose-based policies considered in this
paper.

\end{sloppypar}

\bibliographystyle{splncs03}

\begin{small}
\bibliography{main-bib}
\end{small}


\end{document}